\newif\ifexternalize
\newtheorem{theorem}{Theorem}
\newtheorem{Definition}{Definition}
\newcommand{\bigO}{\mathcal{O}} 
\DeclareMathOperator{\dist}{dist}
\newcommand{\Q}{Q}
\begin{document}

\newcommand\relatedversion{}
\renewcommand\relatedversion{\thanks{The full version of the paper can be accessed at \protect\url{https://arxiv.org/abs/TODO}}} 

\title{\Large Correlating Theory and Practice in Finding Clubs and Plexes}
\author{Aleksander Figiel \and Tomohiro Koana \and André Nichterlein \and Niklas Wünsche}
\date{TU Berlin, Faculty IV, Algorithmics and Computational Complexity, Berlin, Germany\\
\texttt{\{a.figiel,tomohiro.koana,andre.nichterlein\}@tu-berlin.de}}

\maketitle







\begin{abstract} \small\baselineskip=9pt 
	Finding large ``cliquish'' subgraphs is a classic NP-hard graph problem.
	In this work, we focus on finding maximum $s$-clubs and $s$-plexes, i.\,e., graphs of diameter~$s$ and graphs where each vertex is adjacent to all but~$s$ vertices.
	Preprocessing based on Turing kernelization is a standard tool to tackle these problems, especially on sparse graphs.
	We provide a new parameterized analysis for the Turing kernelization and demonstrate their usefulness in practice.
	Moreover, we provide evidence that the new theoretical bounds indeed better explain the observed running times than the existing theoretical running time bounds.
	To this end, we suggest a general method to compare how well theoretical running time bounds fit to measured running times.
	
\end{abstract}

\section{Introduction}\label{sec:intro}

Highly engineered solvers perform often much better than the known theoretical results would suggest. 
This is especially true when dealing with NP-hard problems. 
Unless P = NP, no efficient (i.\,e.\ polynomial-time) algorithm exists that solves all input instances correctly.
However, optimized implementations can often solve instances with millions of vertices, variables, etc.\ as demonstrated frequently at algorithm engineering conferences; see for example \citet{BFSW22,ST22} for two examples from last year's ALENEX.
Of course, these implementations are not polynomial-time algorithms for NP-hard problems.
The real-world instances are simply not those that require the worst-case runtime.
On the other hand, there are usually small instances making these solvers struggle. 
So theoretical running time bounds do not match observed running times on the given data set.
Obviously, a better connection between theoretical results and empirical findings would be highly desirable.

A multivariate (i.\,e.\ parameterized) analysis of the algorithm allows for a more nuanced picture of running time bounds. 
In principle, it could provide us with a much better prediction for the running time.
However, a comparison to the theoretical parameterized running time is rarely made in practice (although there are notable exceptions~\cite{WB20,KKNNZ21}).
This is probably due to the multitude of issues arising here; let us mention just a few:
For example, most theoretical bounds are stated using the~$O$-notation that hides constants. 
Matching these to observed running times (which depends also on the used hardware) is not straight forward.
Moreover, there are often several different parameterized algorithms which even could have some overlap in their approach, that is, the observed running time most likely depend on many parameters.
In this work, we propose an approach addressing these issues. 
It allows us to compare (roughly) which theoretical running-time bound fits ``better'' to the observed running times for a given data set.

We exemplify our approach on the $s$-\textsc{Club} and $s$-\textsc{Plex} problems and show how for various solver variants different theoretical explanations can be used. 
To this end, we follow the approach of~\citet{WB20} who demonstrated by means of a multivariate analysis why \textsc{Clique} is often efficiently solvable in relatively sparse graphs.

\subsection{Related work}

\paragraph{\textsc{Clique} on Sparse Graphs.}
\textsc{Clique} is one of Karp's 21 NP-complete problems~\cite{Kar72}.
As such, it is well studied, both in theory and practice; see \citet{WH15} for a survey.
The currently fastest exact algorithm has running time~$O(1.20^n)$ \cite{XN17}, where~$n$ is the number of vertices.
While~$1.20$ seems very small, for a graph with 400 vertices the number of steps has more than 30 \emph{digits} which is still infeasibly large.

It is easy to see that any clique is contained in the neighborhood of each of its vertices.
Thus, a very basic approach solving clique on a sparse graph~$G=(V,E)$ is the following.
Take a vertex~$v$ of minimum degree and find the largest clique in~$N[v]$ (the closed neighborhood of~$v$).
Then, remove~$v$ and continue in the same fashion. 
In the end, output the largest found clique.
The \emph{degeneracy}~$d$ of a graph is the size of the largest neighborhood encountered in the above algorithm.
Hence, the above algorithm can be implemented to run in~$1.20^d \cdot n^{O(1)}$ time which is on large sparse graphs far better than the~$O(1.20^n)$ bound.
Many of the graphs considered by \citet{WB20} have several hundred thousand vertices and can be solved in less than a minute (often less than a second).
Yet, some of these graphs have a degeneracy of well above 400 (again resulting in an infeasibly large number of steps).
To rectify this, \citet{WB20} provide an algorithm running in~$1.28^gn^{O(1)}$ time where~$g:= {d-k+1}$ is called the \emph{core-gap} and~$k$ denotes the number of vertices in a maximum clique (see \cref{sec:theory} for a more detailed explanation).
Clearly~$g$ can be much smaller than~$d$.
In fact, \citet{WB20} observe that all their relatively small but hard-to-solve instances have a large core-gap.

\paragraph{Clubs and Plexes.}
An $s$-club is a graph of diameter~$s$. 
An $s$-plex is a graph with~$\ell$ vertices where every vertex has degree at least~$\ell-s$. 
While not required by definition, in this work we only consider \emph{connected} $s$-plexes. 
The task in $s$-\textsc{Club} / $s$-\textsc{Plex} is to find the largest $s$-club / $s$-plex in a given graph.

Both $s$-\textsc{Club} and $s$-\textsc{Plex} are NP-hard as they contain \textsc{Clique} as special case ($s=1$).
Both problems are well-studied in the literature, both from theoretical and practical perspective.
For example, $s$-\textsc{Plex} is W[1]-hard with respect to the parameter solution size~$k$ for all~$s \ge 1$~\cite{KR02,KHMN09}.
In contrast, if~$s > 1$, then $s$-\textsc{Club} is fixed-parameter tractable~\cite{SKMN12,CHLS13}.
We refer to \cite{Kom16} for a further overview on the parameterized complexity of these problems.
%
Several algorithmic approaches (heuristics and exact algorithms) have been proposed and examined to find maximum-cardinality 2-clubs~\cite{BLP02,BLP00,BS17,CHLS13,HKN15,MB12} or 2-plexes~\cite{CMSGMV18,CFMPT18,TBBB13}.
All approaches to efficiently solve $s$-\textsc{Club} or $s$-\textsc{Plex} in large graphs rely on some form of preprocessing.

\subsection{Our results.}
We transfer the approach of \citet{WB20} to the clique-relaxations $s$-\textsc{Club} and $s$-\textsc{Plex}.
To this end, introducing a new graph parameter, we describe and analyze the Turing kernelization for both problems in \cref{sec:theory}.
Moreover, we provide simple branching algorithms showing fixed-parameter tractability with respect to a gap parameter.

In \cref{sec:exp}, we then analyze the performance of the Turing kernelization in computational experiments for~$s \in \{2,3\}$.
To this end we use ILP-formulations with and without Turing kernelization and basic lower bounds.
For $s$-\textsc{Club} significant speedups are observed whereas for $s$-\textsc{Plex} the improvements are not as clear (though still a speedup factor of more than 2.5 is achieved on average).

In \cref{sec:cor}, we then use correlations (more precisely the Pearson correlation coefficient) to analyze how well our theoretical findings fit to our practically observed running times. 
While this measure makes no statement about the efficiency of the algorithms, we can observe that even with the use of black boxes such as ILP-solvers our theoretical findings are reflected in the experimental results, in particular for the $s$-\textsc{Club} problem.

\section{Preliminaries}\label{sec:prelim}

For an integer~$a \in \mathbb{N}$, we denote by~$[a]$ the set~$\{ 1, \dots, a \}$.
For a graph~$G = (V, E)$, let~$n := |V|$ and~$m := |E|$ be the number of vertices and edges, respectively.
Let~$u, v \in V$ be two vertices of~$G$.
Let~$\dist_G(u, v)$ denote the length of any shortest path between~$u$ and~$v$.
For~$x \in \mathbb{N}$, let~$N_{x, G}(v)$ be the~$x$th neighborhood of~$v$, i.e., the set of vertices~$u$ with~$1 \le \dist_G(u, v) \le x$, $N_{x, G}[v] = \{ v \} \cup N_{x, G}(v)$, and~$\deg_{x, G}(v)$ be the size of its $x$th neighborhood, i.e., $\deg_{x, G}(v) = |N_{x, G}(v)|$. 
For a set~$X \subseteq V$ of vertices, let~$G[X]$ denote the subgraph induced by~$X$.
We drop the subscript~$\cdot_x$ for~$x = 1$. 
Also, we omit the subscript~$\cdot_G$ when~$G$ is clear from context.

\paragraph*{Clique relaxations.}
Let~$X$ be a set of vertices.
If the vertices of~$X$ are pairwise adjacent, then we say that~$X$ is a \emph{clique}.
Let~$s \in \mathbb{N}$ be an integer.
We say that~$X$ is an \emph{$s$-club} if the vertices of~$X$ have pairwise distance at most~$s$, i.e., $\max_{u, v \in X} \dist_{G[X]}(u, v) \le s$ and that~$X$ is an \emph{$s$-plex} if~$G[X]$ is connected and every vertex~$v$ in~$X$ has at most~$s - 1$ vertices nonadjacent to~$v$ in~$X \setminus \{ v \}$, i.e., $\max_{v \in X} |X \setminus N(v)| \le s$.
(Note that a~$1$-club and a~$1$-plex are each a clique.)
We sometimes abuse these terms to refer to the subgraph induced by an $s$-club or $s$-plex.
The decision problems \textsc{$s$-Club} and \textsc{$s$-Plex} ask, given a graph~$G$ and an integer~$k \in \mathbb{N}$, whether~$G$ contains an $s$-club and $s$-plex, respectively, of size at least~$k$. 

\paragraph*{Degeneracy.}
We say that a graph~$G = (V, E)$ is \emph{$d$-degenerate} if for every subgraph~$G'$ of~$G$, there exists a vertex with~$\deg_{G'}(v) \le d$.
Equivalently, $G$ is~$d$-degenerate if there is an ordering of~$V$ in which every vertex has at most~$d$ neighbors that appear later in the ordering.
We say that such an ordering is a \emph{degeneracy ordering} of~$G$.
The degeneracy~$d_G$ of~$G$ is the smallest number~$d$ such that~$G$ is~$d$-degenerate.
For a vertex~$v \in V$ and an ordering~$\sigma$ of~$V$, we denote by~$\Q^{\sigma}_{G}(v)$ (and~$\Q^{\sigma}_G[v]$) the set of vertices in~$N_G(v)$ (and~$N_G[v]$) that appear after~$v$ in $\sigma$.
We also omit the superscript~$\cdot^{\sigma}$ when it is clear.

\paragraph*{Parameterized complexity.}
Here, we list several relevant notions from parameterized complexity.
See e.\,g., \citet{DBLP:books/sp/CyganFKLMPPS15} for a more comprehensive exposition of parameterized complexity.
A parameterized problem is \emph{fixed-parameter tractable} or \emph{FPT} for short if every instance~$(I, k)$ can be solved in time~$f(k) \cdot |I|^{\bigO(1)}$ for some computable function~$f$.
Such an algorithm is called an \emph{FPT algorithm}.
It is widely believed that a parameterized problem is not FPT if it is \emph{W$[i]$-hard} for~$i \in \mathbb{N}$. 
One way to show fixed-parameter tractability is via the notion of \emph{Turing kernel}.
For~$t \in \mathbb{N}$, a \emph{$t$-oracle} for a parameterized problem is an oracle that solves any instance~$(I, k)$ in constant time, provided that~$|I| + k \le t$. 
We say that a parameterized problem admits a Turing kernel of size~$f(k)$ if there is an algorithm with an access to a~$f(k)$-oracle that solves~$(I, k)$ in time~$(|I| + k)^{O(1)}$.
It is straightforward to turn a Turing kernel into an FPT algorithm by simply replacing a~$f(k)$-oracle with a brute-force algorithm.
The brute-force algorithm runs in~$f'(k)$ time for some computable function~$f'$, resulting in a~$f'(k) \cdot (|I| + k)^{\bigO(1)}$-time algorithm.

\section{Theory}\label{sec:theory}

In this section, we provide theoretical analysis of clique relaxations based on the notion of Turing kernels.
We first describe in \Cref{ssec:theory:clique} the algorithm for \textsc{Clique} outlined by \citet{WB20}, which runs in~$1.28^{g} n^{\bigO(1)}$ time for the gap~$g := d - k + 1$.
The algorithms have two components.
The first component is the Turing kernel parameterized by the degeneracy~$d$.
In short, we show that \textsc{Clique} is polynomial-time solvable when we have access to~$f(d)$-oracle (see \Cref{sec:prelim}).
In practice, there is no such convenient oracle so we have to provide some algorithm.
This is the second component.
One way to substitute the oracle is to use a brute-force algorithm.
Since every oracle call takes an input whose size is bounded by~$d$, we already obtain an FPT algorithm parameterized by~$d$.
We can actually make a more refined analysis by considering the gap parameter~$g = d - k + 1$.
Essentially, we use an FPT algorithm parameterized by~$g$ rather than relying on brute force.
\citet{WB20} showed that many \textsc{Clique} instances that can be solved efficiently in practice indeed have small values of~$g$.

We want to adapt this approach to clique relaxations, namely, \textsc{$s$-Club} and \textsc{$s$-Plex}.
However, there is one issue:
Under standard complexity assumptions, there is no FPT algorithm for \textsc{$s$-Club} or \textsc{$s$-Plex}.
More precisely, \textsc{$s$-Club} is known to be NP-hard for~$s = 2$ and~$d = 6$ \cite{HKNS15} and \textsc{$s$-Plex} is known to be W[1]-hard when parameterized by~$d + s$ \cite{DBLP:conf/isaac/KoanaKS20}.
If we were to have a Turing kernel parameterized by the degeneracy~$d$, then it would imply that these problems are FPT with respect to~$d$, which would contradict these results under standard complexity assumptions.
For this reason, we consider a broader notion of degeneracy, which we call \emph{$x$-degeneracy} for~$x \in \mathbb{N}$ (1-degeneracy coincides with the standard degeneracy).
We give the formal definition in \Cref{ssec:theory:degeneracy}.
With the notion of~$x$-degeneracy at hand, we describe how to adapt the approach employed by \citet{WB20} to \textsc{$s$-Club} and \textsc{$s$-Plex}, in \Cref{ssec:theory:club,ssec:theory:plex}, respectively.

\subsection{Algorithm for Clique}
\label{ssec:theory:clique}

\paragraph*{Turing kernel.}
The \textsc{Clique} problem admits a Turing kernel, in which every input to the oracle has at most~$d + 1$ vertices (thus size~$\bigO(d^2)$) as follows:
For an instance~$(G, k)$ of \textsc{Clique}, consider a degeneracy ordering of~$\sigma = (v_1, \dots, v_n)$ of~$G$.
We will assume that~$k \le d + 1$ since a~$d$-degenerate graph has no clique of size~$d + 1$.
Observe that for every clique~$C$ of~$G$, we have~$C \subseteq \Q^\sigma[v]$, where~$v \in C$ is the vertex that appears first in a degeneracy ordering~$\sigma$.
Thus, $G$ has a clique of size~$k$ if and only if there exists a vertex~$v \in V$ such that~$G[\Q[v]]$ has a clique of size~$k$.
Since~$G$ is~$d$-degenerate, $G[\Q[v]]$ has at most~$d + 1$ vertices and size~$\bigO(d^2)$.
This leads to a Turing kernel for the parameter~$d$.

\paragraph*{Oracle algorithm.}
Every oracle can be replaced with a brute-force algorithm running in~$\bigO(2^d d^2)$ time:
Since~$Q[v]$ is of size at most~$d + 1$, there are~$\bigO(2^d)$ subsets of~$Q[v]$ and for every subset, it takes~$\bigO(d^2)$ time to check if every pair of vertices are adjacent.
Thus, \textsc{Clique} can be solved in~$\bigO(2^d \cdot d^2 n)$ time.
In fact, we can refine the analysis for the oracle algorithm in terms of the \emph{gap} parameter~$d - k + 1$:
To that end, we solve the~$\textsc{Deletion to Clique}$ problem:
Given a graph~$G$ and an integer~$\ell$, the task is to find a set of at most~$\ell$ vertices whose deletion results in a clique.
There is a simple~$\bigO(2^{\ell} n^2)$-time algorithm for this problem.
If the vertices are pairwise adjacent and~$\ell \ge 0$, then we have a yes-instance at hand.
Otherwise there exist two nonadjacent vertices, say~$u$ and~$v$.
If~$\ell = 0$, then we can conclude that there is no solution.
If~$\ell \ge 1$, then we recursively solve two instances~$(G - u, \ell - 1)$ and~$(G - v, \ell - 1)$.
This algorithm runs in~$\bigO(2^\ell \cdot n^2)$ time.
Since we need to solve this problem on~$G[Q(v)]$ with~$\ell := |Q[v]| - k \le d - k + 1$,  we have an~$\bigO(2^{d - k + 1} \cdot d^2 n)$-time algorithm for \textsc{Clique}.
We remark that since an instance~$(G, \ell)$ of \textsc{Deletion to Clique} is equivalent to a \textsc{Vertex Cover} instance~$(\overline{G}, \ell)$ for the complement graph~$\overline{G}$ of~$G$, by using a faster known FPT algorithm \cite{XN17}, we obtain an~$\bigO^*_{d}(1.28^{d - k + 1} n)$-time algorithm for \textsc{Clique} ($\bigO^*_d$ hides factors polynomial in~$d$).

\subsection{Extending degeneracy}
\label{ssec:theory:degeneracy}

As mentioned in the beginning of this section, we consider a broader notion of degeneracy defined as follows.

\begin{Definition}
	Let~$G$ be a graph and~$x \in \mathbb{N}$.
	The \emph{$x$-degeneracy} of~$G$ is the smallest integer~$d_x \in \mathbb{N}$ such that for every subgraph~$G'$ of~$G$, there exists a vertex~$v$ with~$|N_{x,G'}(v)| \le d_x$. 
\end{Definition}

The $x$-degeneracy can be formulated in an alternative way:

\begin{Definition}
	Let~$G$ be a graph and~$x \in \mathbb{N}$.
	The \emph{$x$-degeneracy} of~$G$ is the smallest integer~$d_x$ such that there is an ordering~$\sigma = (v_1, \dots, v_n)$ of~$G$ such that for every~$i \in [n]$, the $x$th neighborhood of~$v_i$ in~$G[v_i, \dots, v_n]$ has size at most~$d_x$.
	The ordering~$\sigma$ is called an \emph{$x$-degeneracy ordering}.
	The set of vertices in~$N_{x, G}[v]$ that appear after~$v$ in~$\sigma$ is denoted by~$Q^{\sigma}_{x, G}[v]$.
\end{Definition}

It is not difficult to show that these two definitions are equivalent.
We remark that the notion of 2-degeneracy has been proposed by Trukhanov et al.~\cite{TBBB13} in the context of finding $s$-plexes.
We show that the $x$-degeneracy and an $x$-degeneracy ordering can be found in polynomial time.

\begin{theorem}
	Given a graph~$G$ and an integer~$x \in \mathbb{N}$, we can compute the $x$-degeneracy of~$G$ and an $x$-degeneracy ordering of~$G$ in~$\bigO(n^2 m)$ time.
\end{theorem}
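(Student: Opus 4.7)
The plan is to adapt the classical greedy algorithm for computing ordinary degeneracy so that it works with $x$-th neighborhoods. The algorithm maintains a current graph $G_i$ (starting with $G_1 := G$); at iteration~$i$ it picks a vertex~$v_i$ minimizing $|N_{x, G_i}(v_i)|$, appends $v_i$ to the output ordering~$\sigma$, and sets $G_{i+1} := G_i - v_i$. The value returned as the $x$-degeneracy is $d' := \max_i |N_{x, G_i}(v_i)|$, and $\sigma$ is returned as the $x$-degeneracy ordering. Since $G_i = G[v_i, \dots, v_n]$ by construction, $\sigma$ indeed satisfies the condition of the ordering-based definition with value~$d'$.

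For correctness I would prove $d' = d_x$ via two inequalities. To show $d' \le d_x$: each $G_i$ is a subgraph of~$G$, so by the first definition some vertex of $G_i$ has $x$-degree at most $d_x$ in $G_i$; greedy picks a minimizer, whence $|N_{x, G_i}(v_i)| \le d_x$ at every step. To show $d_x \le d'$: let $G'$ be an arbitrary subgraph of~$G$ and let $v_i$ be the first vertex of $\sigma$ lying in $V(G')$. Then $V(G') \subseteq V(G_i)$, and because every edge and hence every path of $G'$ also exists in $G_i$, we get $N_{x, G'}(v_i) \subseteq N_{x, G_i}(v_i)$, so $|N_{x, G'}(v_i)| \le d'$. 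Thus $v_i$ witnesses the bound in~$G'$.

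For the running time I would bound a single iteration by doing one truncated BFS to depth~$x$ from every remaining vertex: each such BFS takes $\bigO(n + m)$ time, so recomputing all $x$-degrees costs $\bigO(n(n+m))$ per iteration. Over the $n$ iterations this totals $\bigO(n^2(n+m))$, which is $\bigO(n^2 m)$ after discarding isolated vertices in a trivial $\bigO(n+m)$ preprocessing step (so that $m \ge n-1$). Finding the minimum-$x$-degree vertex per iteration is absorbed by the same bound.

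The main obstacle is the lower-bound direction of correctness, where one has to be careful that $x$-neighborhoods are graph-dependent (they rely on short paths, not merely on the vertex set): one needs $G' \subseteq G_i$ in the strong sense that both vertex and edge sets are contained, so that paths of length at most~$x$ in~$G'$ remain paths in~$G_i$. Once this containment is set up cleanly, the argument transports the greedy guarantee from~$G_i$ to an arbitrary subgraph~$G'$, and the running-time analysis follows from repeated BFS without further subtlety.
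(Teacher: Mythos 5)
Your proposal is correct and uses essentially the same algorithm and running-time analysis as the paper: greedily delete a vertex of minimum $x$-th neighborhood size, recomputing all $x$-th neighborhoods via BFS in $\bigO(n(n+m))$ time per round over $n$ rounds. The only difference is that you spell out the two-inequality correctness argument (which the paper leaves implicit), and your handling of the subgraph containment $G' \subseteq G_i$ and the $n+m$ versus $m$ issue is sound.
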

\begin{proof}
	We repeat the following until the graph is empty:
	for every vertex~$v$, we compute the $x$th neighborhood of~$v$.
	We find a vertex whose $x$th neighborhood has the smallest size and delete it from the graph.
	The ordering in which vertices are deleted is an $x$-degeneracy ordering.
	The $x$-degeneracy is the maximum over all vertices of the $x$th neighborhood size when they are deleted.
	Note that we spend~$\bigO(nm)$ time to compute the $x$th neighborhood of every vertex using e.g., BFS.
	Since we repeat this~$n$ times, the algorithm runs in the claimed time.
	\hfill
\end{proof}

\subsection{Algorithm for \textsc{$s$-Club}}
\label{ssec:theory:club}
\paragraph{Turing kernel.}
For \textsc{$s$-Club}, the Turing kernel for \textsc{Clique} can be adapted as follows.
For every~$s$-club~$C$ of~$G$, we have~$C \subseteq Q^{\sigma}[v]$, where~$v \in C$ is the first vertex of~$C$ in an $s$-degeneracy ordering~$\sigma$.
Thus, $G$ has an $s$-club of size~$k$ if and only if there exists a vertex~$v \in V$ such that~$G[\Q[v]]$ has an $s$-club of size~$k$.
By the definition of $x$-degeneracy, we have~$|Q(v)| \le d_s$.
Thus, we have a Turing kernel in which every oracle call involves at most~$d_s + 1$ vertices.

\paragraph{Oracle algorithm.}
Again, we can replace every oracle call with a brute-force algorithm.
The input to every oracle call has at most~$d_s + 1$ vertices and hence there are~$2^{d_s + 1}$ subsets.
Moreover, for every subset, it takes~$\bigO(d_s^3)$ time to determine whether the vertices have pairwise distance at most~$s$, resulting in an algorithm running in~$\bigO(2^{d_s} d_s^3)$ time.
As in \Cref{ssec:theory:clique}, we can also refine the algorithm substituting for the oracle using the parameter~$d_s - k + 1$.
To this end, we solve the \textsc{Deletion to $s$-Club} problem:
Given a graph~$G$ and an integer~$\ell$, the task is to find a set of at most~$\ell$ vertices whose deletion results in an $s$-club.
There is a simple~$\bigO(2^{\ell} n^3)$-time algorithm for this problem.
If~$G$ has diameter at most~$s$ and~$\ell \ge 0$, then we have a yes-instance at hand.
Otherwise there exist two vertices, say~$u$ and~$v$, with~$\dist_G(u, v) > s$.
If~$\ell = 0$, then we can conclude that there is no solution.
If~$\ell \ge 1$, then we recursively solve two instances~$(G - u, \ell - 1)$ and~$(G - v, \ell - 1)$.
Since it takes~$\bigO(n^3)$ time to compute all pairwise distances, this algorithm runs in~$\bigO(2^\ell \cdot n^3)$ time.
Since we need to solve this problem on~$G[Q[v]]$ with~$\ell := |Q[v]| - k \le d_s - k + 1$, we obtain:

\begin{theorem}
	Given the subgraph~$G[Q_s^{\sigma}[v]]$ for every~$v \in V$ for an $s$-degeneracy ordering~$\sigma$, \textsc{$s$-Club} can be solved in~$\bigO(2^{d_s - k} \cdot d_s^3 n)$ time.
\end{theorem}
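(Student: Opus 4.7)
The statement combines two ingredients already assembled in this section---the Turing kernel from the preceding paragraph and the \textsc{Deletion to $s$-Club} branching algorithm---so I expect no real obstacle beyond assembling these pieces and bookkeeping the running time.

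The plan is as follows. First, I would invoke the Turing kernel: for each $v \in V$, take the given subgraph $G[Q_s^\sigma[v]]$ and decide whether it contains an $s$-club of size at least~$k$, answering ``yes'' overall if and only if some such check succeeds. Correctness follows from the already-observed fact that any $s$-club~$C$ of~$G$ lies in~$Q_s^\sigma[v]$, where~$v$ is the first vertex of~$C$ in~$\sigma$: every $u \in C$ satisfies $\dist_{G[C]}(v,u) \le s$ and appears after~$v$, and $C$ lies in the suffix of $\sigma$ starting at~$v$.

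Second, to handle each subproblem, I would run the \textsc{Deletion to $s$-Club} branching on $G[Q_s^\sigma[v]]$ with deletion budget $\ell := |Q_s^\sigma[v]| - k$, skipping $v$ if $\ell < 0$. Since $|Q_s^\sigma[v]| \le d_s + 1$, we get $\ell \le d_s - k + 1$. The generic branching runs in $\bigO(2^\ell \cdot N^3)$ time on an $N$-vertex input, and here $N \le d_s + 1$, so one invocation costs $\bigO(2^{d_s - k + 1} \cdot d_s^3) = \bigO(2^{d_s - k} \cdot d_s^3)$. The one correctness point I would double-check is the branching rule itself: it relies on the monotonicity observation that deleting vertices can only increase pairwise distances in the induced subgraph, so whenever $u, w$ in the current graph satisfy $\dist(u, w) > s$, any surviving $s$-club must omit $u$ or $w$.

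Finally, summing over the $n$ given subgraphs yields the desired $\bigO(2^{d_s - k} \cdot d_s^3 \cdot n)$ bound. Because the subgraphs $G[Q_s^\sigma[v]]$ are supplied as input by assumption, the time to compute an $s$-degeneracy ordering or the sets $Q_s^\sigma[v]$ does not have to be charged to this algorithm, which is what keeps the stated bound free of $nm$-type terms.
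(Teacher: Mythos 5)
Your proposal is correct and follows essentially the same route as the paper: reduce via the Turing kernel to the $n$ subgraphs $G[Q_s^{\sigma}[v]]$, then solve \textsc{Deletion to $s$-Club} on each with budget $\ell = |Q_s^{\sigma}[v]| - k \le d_s - k + 1$ using the $\bigO(2^{\ell} N^3)$ branching, giving $\bigO(2^{d_s-k} \cdot d_s^3)$ per call and the claimed total. Your added justifications (why $C$ lies in the suffix neighborhood, and why the branching rule is safe because deletion cannot decrease distances) are points the paper leaves implicit, but they do not change the argument.
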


\subsection{Algorithm for $s$-Plex}
\label{ssec:theory:plex}
\paragraph{Turing kernel.}
For \textsc{$s$-Plex}, we will provide two adaptations.
First, note that every $s$-plex is also an $s$-club and thus the Turing kernel with the parameterization by~$d_s$ follows analogously.
For another adaptation, we use the fact that any $s$-plex with at least~$2s - 1$ vertices have diameter at most two, as observed by \citet{SF78}:
Suppose that two vertices~$u$ and~$v$ in an $s$-plex~$C$ have distance three in~$G[C]$.
Then, every vertex in~$C$ is nonadjacent to either~$u$ or~$v$.
Since for each of~$u$ and~$v$, there are at most~$s - 1$ vertices nonadjacent to it, we have~$|C| \le 2s - 2$.
This leads to a Turing kernel with respect to the parameter~$d_2$ when~$k \ge 2s - 1$.
For every $s$-plex of size at least~$2s - 1$, we have~$C \subseteq Q^\sigma_{2,G}[v]$, where~$v \in C$ is the first vertex of~$C$ in a~$2$-degeneracy ordering~$\sigma$.
Thus, we have again a Turing kernel where every oracle call involves at most~$d_2 + 1$ vertices.

\paragraph{Oracle algorithm.}
Again, we can replace every oracle call with a brute-force algorithm.
The input to every oracle call has at most~$d_2 + 1$ vertices and hence there are~$2^{d_2 + 1}$ subsets.
Moreover, for every subset, it takes~$\bigO(d_2^2)$ time to determine whether it is an $s$-plex, resulting an algorithm running in~$\bigO(2^{d_2} d_2^2)$ time.
As in \Cref{ssec:theory:clique}, we can also refine the algorithm substituting for the oracle using the parameter~$d_2 - k + 1$.
To that end, we solve the \textsc{Deletion to $s$-Plex} problem:
Given a graph~$G$ and an integer~$\ell$, the task is to find a set of at most~$\ell$ vertices whose deletion results in an $s$-plex.
There is a simple~$\bigO((s + 1)^{\ell} n^2)$-time algorithm for this problem.
If~$G$ is an $s$-plex and~$\ell \ge 0$, then we have a yes-instance at hand.
Otherwise there exist a vertex~$v$ and~$s$ vertices nonadjacent to~$v$.
If~$\ell = 0$, then we can conclude that there is no solution.
If~$\ell \ge 1$, then we recursively solve~$s + 1$ instances~$(G - v, \ell - 1)$ and~$(G - u, \ell - 1)$ where~$u$ is one of~$s$ vertices nonadjacent to~$v$.
Since it takes~$\bigO(n^2)$ time to check if the graph is an $s$-plex, this algorithm runs in~$\bigO((s + 1)^\ell \cdot n^2)$ time.
Since we need to solve this problem on~$G[Q[v]]$ with~$\ell := |Q[v]| - k \le d_2 - k + 1$, we obtain:

\begin{theorem}
	Given the subgraph~$G[Q_s^{\sigma}[v]]$ for every~$v \in V$ for an $s$-degeneracy ordering~$\sigma$, \textsc{$s$-Plex} can be solved in time~$\bigO((s + 1)^{d_s - k} \cdot d_s^2 n)$ and~$\bigO(s^2 n^{2s - 1} + (s + 1)^{d_2 - k} \cdot d_2^3 n)$.
\end{theorem}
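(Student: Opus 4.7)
My plan is to derive both bounds by composing, for each one, the appropriate Turing kernel developed earlier in this subsection with the \textsc{Deletion to $s$-Plex} branching algorithm described in the paragraphs just above the theorem.

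For the first bound $\bigO((s+1)^{d_s-k} d_s^2 n)$, I would first justify that the $s$-degeneracy Turing kernel applies to \textsc{$s$-Plex} by showing that every connected $s$-plex is in particular an $s$-club. This follows because, on a shortest path of length $d$ in a connected $s$-plex, either endpoint has at least $d-1$ other path vertices as non-neighbors, so the $s$-plex condition forces $d\le s$. Hence any $s$-plex $C$ of size at least $k$ satisfies $C\subseteq Q_s^\sigma[v]$ for the first vertex $v\in C$ in an $s$-degeneracy ordering $\sigma$, and it suffices to run \textsc{Deletion to $s$-Plex} on the supplied subgraph $G[Q_s^\sigma[v]]$ for each $v\in V$ with deletion budget $\ell:=|Q_s^\sigma[v]|-k\le d_s-k+1$. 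Since the input has $\bigO(d_s)$ vertices, the previously derived $\bigO((s+1)^\ell n^2)$ bound for that subroutine instantiates to $\bigO((s+1)^{d_s-k}d_s^2)$ per call, giving $\bigO((s+1)^{d_s-k}d_s^2 n)$ over all $n$ oracle calls.

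For the second bound $\bigO(s^2 n^{2s-1}+(s+1)^{d_2-k}d_2^3 n)$, I would handle small and large $s$-plexes separately. Any $s$-plex of size at most $2s-2$ is found by enumerating all vertex subsets of size at most $2s-2$ and directly verifying the $s$-plex property in $O(s^2)$ time each, which contributes the $\bigO(s^2 n^{2s-1})$ term (the exponent $2s-1$ being comfortably loose). Any $s$-plex of size at least $2s-1$ has diameter at most $2$ by the Seidman--Foster observation already used above, and is therefore contained in $Q_2^\sigma[v]$ for the first $v$ in a 2-degeneracy ordering. Replaying the first argument with the 2-degeneracy kernel in place of the $s$-degeneracy kernel and taking $\ell:=d_2-k+1$ produces the $\bigO((s+1)^{d_2-k}d_2^3 n)$ term, the $d_2^3$ absorbing polynomial overhead for branching, checking connectedness, and counting non-neighbors in the kernel. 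Since the theorem's hypothesis literally supplies only $G[Q_s^\sigma[v]]$, I would either treat the 2-degeneracy subgraphs as an additional input of the same flavor or compute them within the stated budget.

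The only real point of finesse is the case distinction at $k=2s-1$: I would run the brute-force enumeration whenever $k\le 2s-2$ (to catch $s$-plexes of size in $[k,2s-2]$) and the 2-degeneracy kernel for any target size $\ge \max(k,2s-1)$, so that every $s$-plex of size at least $k$ is caught exactly once. What I expect to require the most care is matching the stated polynomial factors ($d_s^2$ in the first bound versus $d_2^3$ in the second), which is pure bookkeeping once the kernel sizes and the per-call cost of \textsc{Deletion to $s$-Plex} are pinned down.
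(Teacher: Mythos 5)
Your proposal is correct and follows essentially the same route as the paper: combine the $d_s$-based Turing kernel (via the fact that connected $s$-plexes are $s$-clubs) and the $d_2$-based kernel (via the Seidman--Foster diameter-two observation for plexes of size at least $2s-1$) with the $(s+1)$-way branching algorithm for \textsc{Deletion to $s$-Plex}. The details you supply beyond the paper's exposition --- the shortest-path argument that an $s$-plex has diameter at most $s$, the explicit brute-force enumeration yielding the $s^2 n^{2s-1}$ term, and the case split around $k = 2s-1$ --- are exactly the steps the paper leaves implicit.
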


We remark that for very small~$s$, the first term~$s^2 n^{2s - 1}$ can be ignored in practice, because most instances contain an $s$-plex of size at least~$2s - 1$.

\section{Experiments}\label{sec:exp}

In this section we present the results of our computational experiments for \textsc{$s$-Club} and \textsc{$s$-Plex} for~$s \in \{2,3\}$ on a large dataset of real-world graphs.
We did not optimize every aspect of the implementations as our goal is to investigate the effect of Turing kernelization and the extend to which our theoretical findings are reflected on the running time (this is discussed in \cref{sec:cor}).
We will see, that the Turing kernelization is quite beneficial for \textsc{$s$-Club} but for \textsc{$s$-Plex} the situation is not as clear.

\subsection{Setup}
All experiments were performed on a machine running Ubuntu 18.04 LTS, with an Intel Xeon\textsuperscript{\textregistered{}} W-2125 CPU and 256GB of RAM.
A maximum running time of 1 hour per instance was set.
We used Gurobi 8.1 to solve ILP-formulations, limited to a single thread of execution.
The program that was used to build the ILP models was implemented in C++ and compiled with g++ 7.5.

\paragraph{Dataset.}
The static graphs from the Network Repository \cite{NetRepo} were used for all experiments.
Graphs for which at least one solver configuration timed out, ran out of memory, or completed in less than 0.05 seconds were omitted, in the last case to reduce the effect of noise in the small running time measurements.
The resulting dataset consists of 245 graphs, with 1093 vertices on average.

We remark that \textsc{$s$-Club} and \textsc{$s$-Plex} has been solved for small~$s$ on much larger graphs within minutes~\cite{HKN15,CMSGMV18,CFMPT18}.
The reason we focus on smaller graphs is to have a meaningful multivariate analysis.
More precisely, we want to see if the running time grows (as suggested by theory) with growing $x$-degeneracy and gap.
Having running times for large graphs with small $x$-degeneracy and gap but not for large graphs with large $x$-degeneracy and gap would give misleading results in our analysis in \cref{sec:cor}.

\subsection{Solvers}
We used an ILP solver as oracle for \textsc{$s$-Club} and \textsc{$s$-Plex} in the Turing kernelization.

\paragraph{ILP formulations.}
For \textsc{$s$-Plex} we used a straight-froward formulation with~$\bigO(n)$ variables and constraints and~$\bigO(n+m)$ non-zeroes\footnote{An $s$-plex of size~$\ell > 2s-1$ is guaranteed to be connected and of diameter two~\cite{SF78}. As we only consider~$s\in \{2,3\}$, we do not add constraints enforcing connectedness to the ILP. Unsurprisingly, all found subgraphs were still connected.}.
\begin{align*}
	&\text{maximize:}		& & y \\
	&\text{subject to:}		& & x_v \in \left\{ 0, 1 \right\}, y \in \{0,\ldots,n\}  \\
	&						& & y = \sum_{v \in V} x_v \\
	&\forall v \in V\colon	& & |V|(1-x_v) + \sum_{u \in N(v)} x_u  \geq y - s
\end{align*}
%
%
For \textsc{2-Club} a simplified formulation by \citet{BLP02} was used. 
It has~$\bigO(n)$ variables, $\bigO(n^2)$ constraints, and~$\bigO(n^3)$ non-zeroes.

\begin{align*}
	&\text{maximize:}   						& & \sum_{v \in V} x_v \\
	&\text{subject to:} 						& & x_v \in \left\{ 0, 1 \right\} \\
	&\forall u, v \in V, \dist(u,v) > 2\colon 	& & x_u + x_v \leq 1 \\
	&\forall u, v \in V, \dist(u,v) = 2\colon	& & x_u + x_v \leq 1 + \sum_{c \in N(u) \cap N(v)} x_c   
\end{align*}
%
For \textsc{3-Club} the neighborhood formulation from \citet{AC08,AC14}
was used, with~$\bigO(n)$ binary variables, $\bigO(m)$ continuous variables, at most~$\bigO(n^2)$ and~$\bigO(n^3)$ constraints and non-zeroes, respectively\footnote{For a compact and general ILP formulation for \textsc{$s$-Club} ($s \geq 2$) we refer to \citet{VPP15}.}.
%
\begin{align*}
&\text{maximize:} 									& & \sum_{v \in V} x_v \\
&\text{subject to:} 								& & x_v \in \left\{ 0, 1 \right\} \\
&\dist(u,v) > 3\colon								& & x_u + x_v \leq 1 \\
&\dist(u,v) \in \{2,3\}\colon						& & x_u + x_v \leq 1 + \smashoperator{\sum_{c \in N(u) \cap N(v)}} x_c + \sum_{e \in E_{uv}} z_e  \\
&\forall e=\{a,b\} \in E\colon						& & z_e \leq x_a, \,\, z_e \leq x_b \\
&\forall e \in E\colon								& & 0 \leq z_e \leq 1
\end{align*}
where~$E_{uv} = \{ \{p,q\} \in E \mid p \in N(u) \setminus N(v), q \in N(v) \setminus N(u) \}$.

\paragraph{Solver variants.}
We tested several different approaches using these ILP models, each reflecting one stage of the concepts in \cref{sec:theory}.
To this end, we use four different solver configurations, namely \texttt{noTK, hint, default} and \texttt{full} (described below).
We will refer to, for example, \texttt{2club\_noTK} as the benchmark results of the \texttt{noTK} solver configuration on the \textsc{2-Club} problem.

One solver variant simply built a single ILP model for the entire graph, which we call the \texttt{noTK} variant (no Turing kernel).
All other variants use the Turing kernelization to some extent.
The \texttt{full} variant makes only basic use of Turing kernelization, utilizing the 2/3-degeneracy as described in \cref{sec:theory}.
There, each oracle call is solved via an ILP.
The solution size is then the maximum solution size over all cores.

For \textsc{2-Club} and \textsc{2-Plex} the Turing kernelization using 2-degeneracy is employed, for \textsc{3-Club} and \textsc{3-Plex} the one using 3-degeneracy.
As there is only one connected 3-plex of diameter three (the~$P_4$) which was never the largest 3-plex in our experiments we also used the 2-degeneracy based Turing kernelization for \textsc{3-Plex}.
We report the results of this variant under \texttt{3plex-2}.
%

The \texttt{default} variant uses the Turing kernel approach in combination with a simple lower bound:
It uses the maximum solution size of already solved ILPs as a lower bound on the global solution size by adding a constraint to the ILPs enforcing that the solution has to be larger than the current lower bound. 
The order in which the ILPs are solved can therefore have an impact on the overall running time.
We did not analyze this effect and used a fixed 2/3-degeneracy ordering.
Instead, we remove this effect in the \texttt{hint} variant.
%
There, we added a constraint to each ILP model in the Turing kernels which enforced that the solution size to the Turing kernel is at least the size of the global solution size for the entire graph.
Thus, one can think of (heuristically) optimizing in the \texttt{default} variant the order in the Turing kernelization so that the oracle calls giving the largest results come first.
Alternatively, this shows the maximum speedup possible by a ``perfect'' heuristic.
Note that in the \texttt{hint} variant at least one ILP model still has a feasible solution.


\subsection{Results}

\newcommand{\compareMarkers}{
	\addplot[color=black,domain=\minValueRun:\maxValueRun,samples=4] {x};
	\addplot[dashed,color=black!75,domain=\minValueRun:\maxValueRun,samples=4] {5*x};
	\addplot[dashed,color=black!75,domain=\minValueRun:\maxValueRun,samples=4] {0.2*x};
	\addplot[dotted,color=black,domain=\minValueRun:\maxValueRun,samples=4] {25*x};
	\addplot[dotted,color=black,domain=\minValueRun:\maxValueRun,samples=4] {0.04*x};

}
\newcommand{\compareGroupPlotGeneral}[6]{
	\nextgroupplot[#4, ylabel={#1}]
		\addplot+ table[col sep=comma,y={runtime_default}, x={runtime_notk}] {#2};
		\compareMarkers{}
		\addplot[color=red,mark=none] coordinates {(\minValueRun, 3600) (\maxValueRun, 3600)};
		\addplot[color=red,mark=none] coordinates {(3600, \minValueRun) (3600, \maxValueRun)};

	\nextgroupplot[#5]
		\addplot+ table[col sep=comma,y={runtime_default}, x={runtime_notk}] {#3};
		\compareMarkers{}
		\addplot[color=red,mark=none] coordinates {(\minValueRun, 3600) (\maxValueRun, 3600)};
		\addplot[color=red,mark=none] coordinates {(3600, \minValueRun) (3600, \maxValueRun)};

	\nextgroupplot[#4, #6]
		\addplot+ table[col sep=comma,y={runtime_default}, x={runtime_hint}] {#2};
		\compareMarkers{}
		\addplot[color=red,mark=none] coordinates {(\minValueRun, 3600) (\maxValueRun, 3600)};
		\addplot[color=red,mark=none] coordinates {(3600, \minValueRun) (3600, \maxValueRun)};
}%
\newcommand{\compareGroupPlotTop}[3]{\compareGroupPlotGeneral{#1}{#2}{#3}{title={$s = 2$}}{title={$s = 3$}}{}}%
\newcommand{\compareGroupPlot}[3]{\compareGroupPlotGeneral{#1}{#2}{#3}{}{}{xlabel={\texttt{hint} [s]}}}%
\begin{figure}[t!]
	\centering
	\def\maxValueRun{10000}
	\def\minValueRun{0.001}
	
	\ifexternalize\tikzsetnextfilename{compare-running-times}\fi
	\begin{tikzpicture}
		\begin{groupplot}[
				group style={
					group name=my plots, 
					group size=3 by 2, 
					xlabels at=edge bottom, 
					ylabels at=edge left, 
					xticklabels at=edge bottom, 
					yticklabels at=edge left, 
					vertical sep=4pt,
					horizontal sep=4pt
				}, 
				ymode=log,
				grid,
				xmode=log,
				xmin=0.03,
				xmax=6000,
				ymin=0.03,
				ymax=6000,
				ytick={0.01,0.1,1,10,100,1000},
				xtick={0.01,0.1,1,10,100,1000},
				width=0.38\hsize, 
				height=0.3\hsize, 
				xlabel={\texttt{noTK} [s]},
				cycle multiindex* list = {only marks \nextlist mark=x}, 
			] 
			\compareGroupPlotTop{\texttt{default} (clubs)}{data/2club-comparison.csv}{data/3club-comparison.csv}

			\compareGroupPlot{\texttt{default} (plexes)}{data/2plex-comparison.csv}{data/3plex-comparison.csv}
		\end{groupplot}
	\end{tikzpicture} 
	
	
	\caption{
		Running time comparison (in seconds) of different variants (top row for~$2$- and $3$-\textsc{Club}, second row for~$2$- and $3$-\textsc{Plex}).
		Each cross represents one instance with the~$x$- and~$y$-coordinates indicating the running time of the respective variant (in seconds): \texttt{default} and \texttt{noTK}.
		Thus, a cross above (below) the solid diagonal indicates that the solver on the $x$-axis ($y$-axis) is faster on the corresponding instance.
		The diagonal lines mark factors of~$1$ (solid), $5$ (dashed) and~$25$ (dotted). 
		The solid horizontal red lines (at 3600 seconds) indicate the time limit.
		For~$2$- and $3$-\textsc{Club} a significant running time improvement is visible.
		For~$2$- and $3$-\textsc{Plex} the picture is not so clear.
	}
	\label{fig:compare-running-times}
\end{figure}
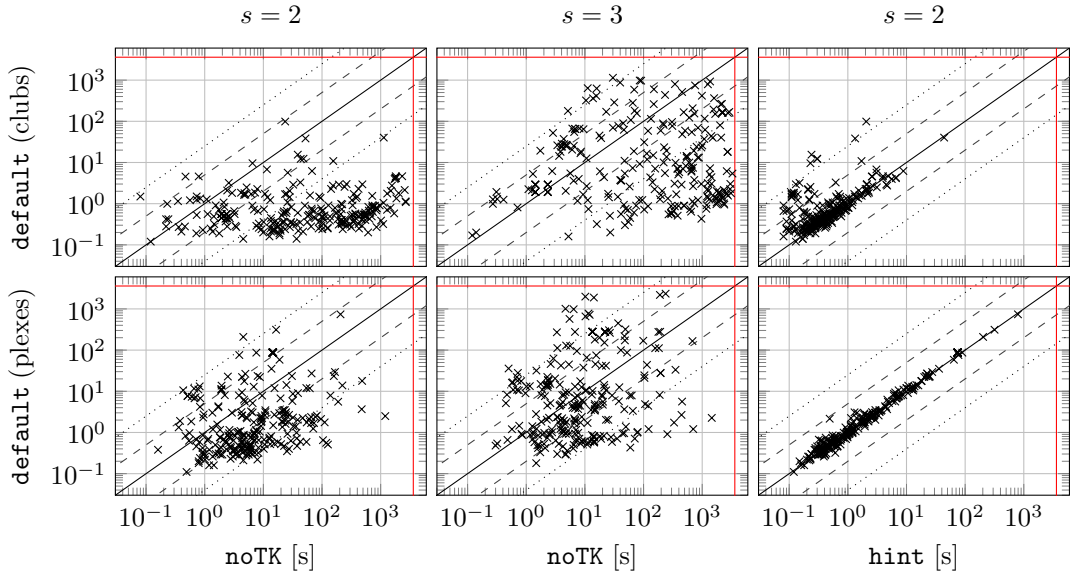

In \cref{tab:avg_runtime} we summarize the average running time of the different solver configurations on the four considered problems.
\begin{table}
	\caption{Average running times in seconds of various solver configurations.}
	\pgfplotstableread[col sep = comma]{data/avg_runtime.csv}\data 
	\centering
	\pgfplotstabletypeset[columns={problem,notk,full,default,hint},
		columns/problem/.style={string type,column name=,column type = {l}},
		columns/notk/.style={precision=1,fixed zerofill,fixed,column name=noTK,column type = {r}},
		columns/hint/.style={precision=1,fixed zerofill,fixed,column name=hint,column type = {r}},
		columns/default/.style={precision=1,fixed zerofill,fixed,column name=default,column type = {r}},
		columns/full/.style={precision=1,fixed zerofill,fixed,column name=full,column type = {r}},
		every head row/.style ={before row=\toprule, after row=\midrule},
	    every last row/.style ={after row=\bottomrule}]{\data}
	\label{tab:avg_runtime}
\end{table}
The approach without Turing kernels is significantly slower for~$2$- and $3$-\textsc{Club} but not so much for~$2$- and $3$-\textsc{Plex}.
This can also be seen in the detailed comparisons in \cref{fig:compare-running-times}.
%
Interestingly, the \texttt{noTK} variants are much faster in finding plexes than in finding clubs (more than 10 times larger average running time).
However, the average running time of \texttt{2club\_default} is five times smaller than that of \texttt{2plex\_default}.
Thus, on the one hand the ILP-formulation we use for finding 2/3-clubs may have a room for improvement.
On the other hand, the Turing kernel approach works much better for clubs than plexes.
The reason is probably that the Turing kernels are built based on distance, which fits better with clubs than plexes.

Unsurprisingly, the \texttt{hint} variant is the fastest one.
However, the \texttt{default} variant is nearly as fast as \texttt{hint} (see also right column of \cref{fig:compare-running-times}), even though it does not receive the solution size as input, and instead uses the maximum solution of the previously solved ILPs to update the lower bound.
Moreover, the \texttt{default} variant is considerably faster than the \texttt{full} variant.
This shows the strength of the lower bounds employed in the ILP-solver.
Remarkably, for finding 2/3-clubs even the \texttt{full} variant brings a decent speedup compared to the \texttt{noTK} variant.


\section{Correlation between Theory and Practice} \label{sec:cor}

\begin{figure}[t!]
	\centering
	\def\maxX{7600}
	\def\minX{-300}
	\def\maxXX{1250}
	\def\minXX{-50}
	\def\maxXXX{170}
	\def\minXXX{-7}

	\ifexternalize\tikzsetnextfilename{running-times-versus}\fi
	\begin{tikzpicture}
		\begin{groupplot}[
				group style={
					group name=my plots, 
					group size=3 by 4, 
					xlabels at=edge bottom, 
					ylabels at=edge left, 
					xticklabels at=edge bottom, 
					yticklabels at=edge left, 
					vertical sep=4pt,
					horizontal sep=4pt
				}, 
				ymode=log,
				grid,
				xmin=\minX,
				xmax=\maxX,
				ymax=200,
				width=0.4\hsize, 
				height=0.33\hsize, 
				%
				cycle multiindex* list = {only marks \nextlist mark=x}, 
				legend style={ at={(0.5,0.95)}, anchor=north,}
			] 
			\nextgroupplot[ylabel={time [s] -- \texttt{noTK}},legend style={ at={(0.95,0.05)}, anchor=south east,},ymax=6000]
				\addplot+[] table[col sep=comma,x={n}, y={runtime_notk}] {data/2club-comparison.csv};
				\addplot[red,domain=\minX:\maxX,samples=3] {11.15998 * 1.00134^x};

			\nextgroupplot[xmin=\minXX,xmax=\maxXX,legend style={ at={(0.95,0.05)}, anchor=south east,},ymax=6000]
				\addplot+[] table[col sep=comma,x={gdegen}, y={runtime_notk}] {data/2club-comparison.csv};
				\addplot[red,domain=\minXX:\maxXX,samples=3] {53.59703 * 0.99909^x};
				
			\nextgroupplot[xmin=\minXXX,xmax=\maxXXX,legend style={ at={(0.95,0.05)}, anchor=south east,},ymax=6000]
				\addplot+[] table[col sep=comma,x expr=(\thisrow{gdegen} - \thisrow{solution}), y={runtime_notk}] {data/2club-comparison.csv};
				\addplot[red,domain=\minXXX:\maxXXX,samples=3] {46.95778 * 1.00086^x};

			\nextgroupplot[ylabel={time [s] -- \texttt{full}},legend style={ at={(0.95,0.05)}, anchor=south east,},ymax=6000]
				\addplot+[] table[col sep=comma,x={n}, y={runtime_full}] {data/2club-comparison.csv};
				\addplot[red,domain=\minX:\maxX,samples=3] {1.74559 * 1.00055^x};

			\nextgroupplot[xmin=\minXX,xmax=\maxXX,legend style={ at={(0.95,0.05)}, anchor=south east,},ymax=6000]
				\addplot+[] table[col sep=comma,x={gdegen}, y={runtime_full}] {data/2club-comparison.csv};
				\addplot[red,domain=\minXX:\maxXX,samples=3] {1.21246 * 1.00701^x};
				
			\nextgroupplot[xmin=\minXXX,xmax=\maxXXX,legend style={ at={(0.95,0.05)}, anchor=south east,},ymax=6000]
				\addplot+[] table[col sep=comma,x expr=(\thisrow{gdegen} - \thisrow{solution}), y={runtime_full}] {data/2club-comparison.csv};
				\addplot[red,domain=\minXXX:\maxXXX,samples=3] {2.98625 * 1.00667^x};

			\nextgroupplot[ylabel={time [s] -- \texttt{default}}]
				\addplot+[] table[col sep=comma,x={n}, y={runtime_default}] {data/2club-comparison.csv};
				\addplot[red,domain=\minX:\maxX,samples=3] {0.69610 * 1.00003^x};

			\nextgroupplot[xmin=\minXX,xmax=\maxXX]
				\addplot+[] table[col sep=comma,x={gdegen}, y={runtime_default}] {data/2club-comparison.csv};
				\addplot[red,domain=\minXX:\maxXX,samples=3] {0.54771 * 1.00198^x};

			\nextgroupplot[xmin=\minXXX,xmax=\maxXXX]
				\addplot+[] table[col sep=comma,x expr=(\thisrow{gdegen} - \thisrow{solution}), y={runtime_default}] {data/2club-comparison.csv};
				\addplot[red,domain=\minXXX:\maxXXX,samples=3] {0.57924 * 1.02599^x};

			\nextgroupplot[ylabel={time [s] -- \texttt{hint}},xlabel={$n$}]
				\addplot+[] table[col sep=comma,x={n}, y={runtime_hint}] {data/2club-comparison.csv};
				\addplot[red,domain=\minX:\maxX,samples=3] {0.47079 * 0.99994^x};

			\nextgroupplot[xmin=\minXX,xmax=\maxXX,xlabel={2-degeneracy}]
				\addplot+[] table[col sep=comma,x={gdegen}, y={runtime_hint}] {data/2club-comparison.csv};
				\addplot[red,domain=\minXX:\maxXX,samples=3] {0.44332 * 0.99996^x};
				
			\nextgroupplot[xmin=\minXXX,xmax=\maxXXX,xlabel={gap}]
				\addplot+[] table[col sep=comma,x expr=(\thisrow{gdegen} - \thisrow{solution}), y={runtime_hint}] {data/2club-comparison.csv};
				\addplot[red,domain=\minXXX:\maxXXX,samples=3] {0.30724 * 1.04403^x};
				
%
%
		\end{groupplot}
	\end{tikzpicture} 
	
	
	\caption{
		The running times of \texttt{2club\_noTK}, \texttt{2club\_full}, \texttt{2club\_default}, and \texttt{2club\_hint} plotted against the number~$n$ of vertices, the 2-degeneracy, and the gap of the input graph.
		The solid red lines are linear regressions best fitting to the data points (where the logarithm of the running times is taken).
	}
	\label{fig:running-times-versus}
\end{figure}
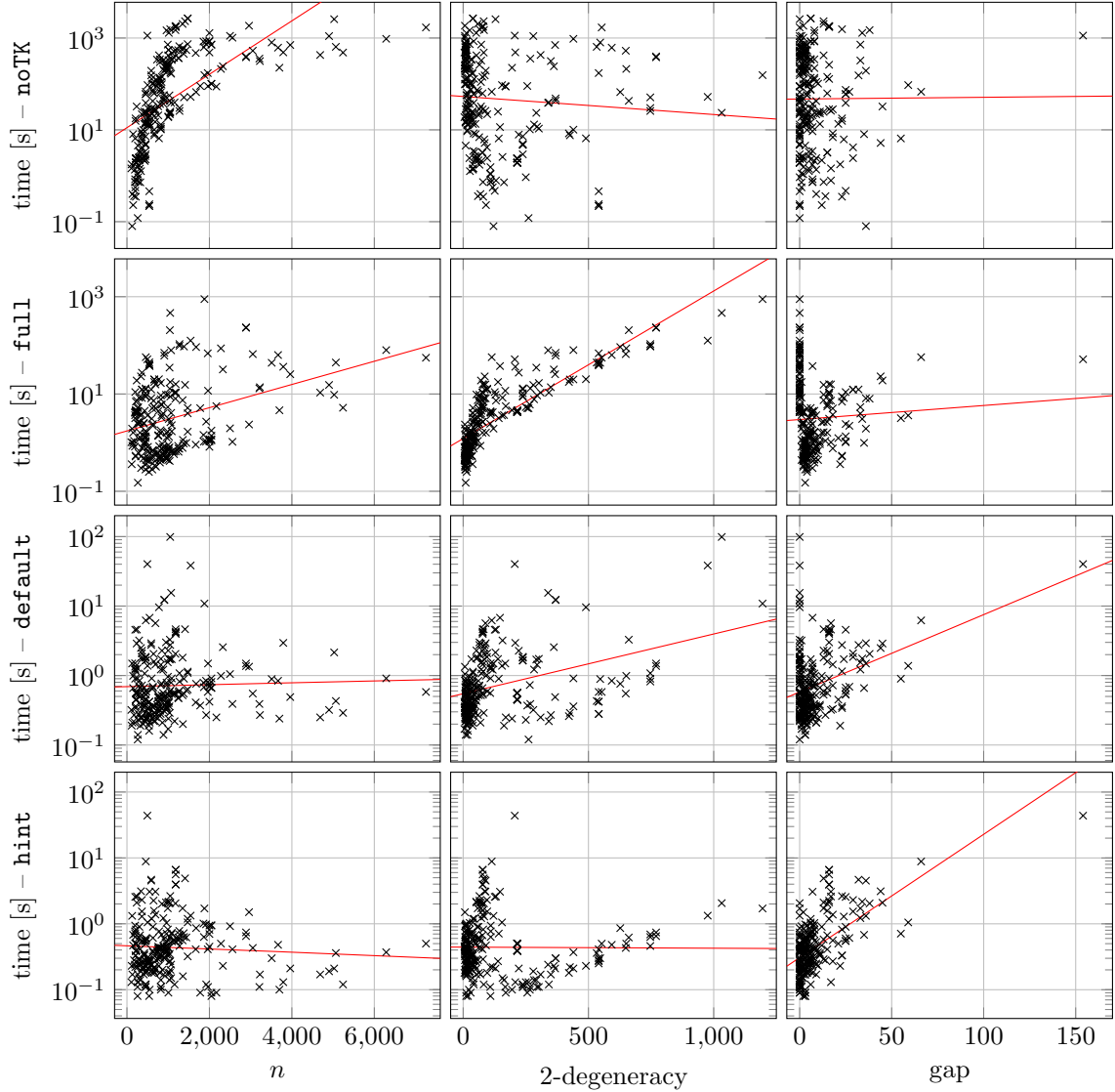

Given the theoretical running time bounds in \cref{sec:theory} and the measured running times in \cref{sec:exp}, we now analyze correlations between these.
Since we have an NP-hard problem, our working hypothesis is that the running time should depend exponentially on some parameter(s).
Natural parameter candidates are the number~$n$ of vertices, the 2- resp.\ 3-degeneracy, and the gap parameter.
The latter parameters are suggested by our theoretical findings.
We studied five problems (counting \texttt{3plex} and \texttt{3plex-2} as two) with four different solvers of each problem.
Thus, there are~$5 \cdot 4 \cdot 3 = 60$ different parameter -- running time pairs to analyze.
We depicted the 12 pairs for \texttt{2club} in \cref{fig:running-times-versus} with the red lines depicting the exponential function of the form~$\alpha^p \cdot \beta$ that best fit the data; these lines are computed via linear regression (logarithm of the running time versus parameter value~$p$).
Obviously, the suggested running time function on the bottom left (\texttt{2club\_hint} with parameter~$n$) is useless: our implementation will in general not become faster the larger the input gets.
The plots on the top left and bottom right seem more sensible.

\subsection{Method}
Instead of ``carefully looking'' at each of the plots in \cref{fig:running-times-versus} and finding arguments for each one of them, we want an automated way of distinguishing sensible from useless suggestions.
To this end, we suggest using the Pearson correlation coefficient which we subsequently just call correlation coefficient.
It is a standard measure of linear correlation between two sets of data.
Simply put, given the (running time, parameter) data points the correlation coefficient computes a number between -1 and 1. 
If there is no correlation at all, then the coefficient is 0.
With perfect correlation (i.\,e.\ the data points are on a straight line with positive slope) the coefficient is 1.
With perfect negative correlation (i.\,e.\ the data points are on a straight line with negative slope) it is -1.
Hence, in our case the numbers close to one describe a good (linear!) correlation between the parameter and the (logarithm of the) running time.

Before we present the correlation coefficients for our experiments and parameters, let us give some disclaimers for our particular setting. 
\begin{itemize}
	\item As we are (for now) only interested in simple exponential dependencies, the Pearson correlation coefficient suffices as we can take the logarithm of all measured running times. 
	There are different correlation coefficients that can also measure non-linear correlations and might be better suited to other settings.
	
	\item A better correlation coefficient does not imply a better running time, just a better correlation with the respective parameter. 
	
	\item We use a very simplistic analysis. 
	For example, we do not discuss confidence intervals or similar issues. 
	The reason being that any ``good'' correlation between a parameter (or a combination of parameters) and the measured running time is only an \emph{indication} for such a correlation. 
	In particular, if some ``new'' correlations are discovered with this method, then this only gives suggestions. 
	We still have to mathematically prove the running times.
	Moreover, if theoretical hardness results (e.\,g.\ NP-hardness for a constant parameter value) disproves the correlation, then new explanations have to be found (e.\,g.\ there are several other parameters that, in combination, also yield the correlation and allow for provable running time bounds).
	
	\item We restrict ourselves to correlations between one parameter and the running time.
		While correlations between multiple parameters and the running time are possible, our theoretical results in \cref{sec:theory} only suggest exponential dependencies between one parameter (2/3-degeneracy or gap) and the running time and not two parameters.
		Incorporating the polynomial factors in the running times of \cref{sec:theory} is possible, but in our analysis it changed the coefficients only marginally (by less than 5\%, usually much less than 1\%).
\end{itemize}

\subsection{Results}


\cref{tab:correlations} summarizes the 60 correlation coefficients of three graph parameters with the logarithm of the measured running times.

\begin{table} 
	\caption{
	Tables summarizing the correlation of different graph parameters~$n$ (left table), $d_2/d_3$ (middle table), and the gap~$g$ (right table) with the logarithm of the measured running times of various solver configurations (def abbreviates default).
	In the middle the correlation with 2-degeneracy is shown for \texttt{2club, 2plex} and \texttt{3plex-2}, and with 3-degeneracy for \texttt{3club} and \texttt{3plex}).
	}
	\label{tab:correlations}
	\setlength{\tabcolsep}{4.5pt}
	
		\pgfplotstableread[col sep = comma]{data/n_vs_logruntime_correl.csv}\data 
		\centering
		\pgfplotstabletypeset[columns={problem,notk,full,default,hint},
			columns/problem/.style={string type,column name=,column type = {l}},
			columns/notk/.style={precision=2,fixed,column name=noTK,column type = {r}},
			columns/hint/.style={precision=2,fixed,column name=hint,column type = {r}},
			columns/default/.style={precision=2,fixed,column name=def,column type = {r}},
			columns/full/.style={precision=2,fixed,column name=full,column type = {r}},
			every head row/.style ={before row=\toprule, after row=\midrule},
			every last row/.style ={after row=\bottomrule}]{\data}
	\hfill
		\pgfplotstableread[col sep = comma]{data/gdegen_vs_logruntime_correl.csv}\data 
		\centering
		\pgfplotstabletypeset[columns={notk,full,default,hint},
			columns/notk/.style={precision=2,fixed zerofill,fixed,column name=noTK,column type = {r}},
			columns/hint/.style={precision=2,fixed zerofill,fixed,column name=hint,column type = {r}},
			columns/default/.style={precision=2,fixed zerofill,fixed,column name=def,column type = {r}},
			columns/full/.style={precision=2,fixed zerofill,fixed,column name=full,column type = {r}},
			every head row/.style ={before row=\toprule, after row=\midrule},
			every last row/.style ={after row=\bottomrule}]{\data}
	\hfill
		\pgfplotstableread[col sep = comma]{data/gap_vs_logruntime_correl.csv}\data 
		\centering
		\pgfplotstabletypeset[columns={notk,full,default,hint},
			columns/notk/.style={precision=2,fixed zerofill,fixed,column name=noTK,column type = {r}},
			columns/hint/.style={precision=2,fixed zerofill,fixed,column name=hint,column type = {r}},
			columns/default/.style={precision=2,fixed zerofill,fixed,column name=def,column type = {r}},
			columns/full/.style={precision=2,fixed zerofill,fixed,column name=full,column type = {r}},
			every head row/.style ={before row=\toprule, after row=\midrule},
			every last row/.style ={after row=\bottomrule}]{\data}
\end{table}
%

Consider the first row corresponding to \textsc{2-Club} in \cref{tab:correlations}.
The first columns display the correlation with~$n$ which is best for the \texttt{noTK} variant.
This well reflects our observations for the plots in the left column of \cref{fig:running-times-versus}: 
The \texttt{default} and \texttt{hint} variant do not display any reasonable correlation with~$n$, only \texttt{noTK} does to some extend.
Similarly, in the right column of \cref{fig:running-times-versus} the correlations of the \texttt{default} and \texttt{hint} variant with the gap-parameter are quite decent, but not for the the \texttt{noTK} variant.
Moreover, in the middle plot (\texttt{default} variant) of the right column in \cref{fig:running-times-versus} there are a few instances that have a high running time despite a parameter value of zero. 
This is an argument against the suggested regression being a ``good'' explanation.
Also, in the bottom right plot one can see that there are no such (drastic) outliers.
Hence, the correlation with the \texttt{hint} variant with the gap is considerably ``better'' than with the slower \texttt{default} variant.
This is also reflected in the corresponding correlation coefficients of 0.61 and 0.35 respectively (see two rightmost columns in \cref{tab:correlations}) and, thus, supports the correlation coefficient as reasonable measure.

The results for the other problems are somewhat similar to the ones for \textsc{2-Club}.
The correlation coefficient for the number of vertices is highest for all problems with the \texttt{noTK} configuration, whereas with the configurations based on Turing kernels it is significantly lower (or even negative). 
This is somehow expected, as all our ILP formulations use~$\bigO(n)$ binary variables.
State-of-the-art ILP solvers are highly complex (``a bag of tricks'') and able to solve instances with millions of integer variables efficiently.
Thus, the correlation of around 0.5 (for \texttt{noTK}) with the number of vertices is higher than for the other variants, but not the overall highest correlations (see second row and second column in \cref{fig:running-times-versus} for the plot corresponding to the highest correlation).


The Turing kernel approaches correlate in general better with the 2/3-degeneracy, notable exceptions are the \texttt{2club\_hint} and \texttt{3club\_hint} variants.
As expected, across all problems the highest correlations with the 2/3-degeneracy are achieved by the \texttt{full} variants:
The 2-degeneracy (3-degeneracy) is in our dataset on average more than five times (more than three times) smaller than~$n$.
Hence, the high correlations for the \texttt{noTK} variants with~$n$ translate to high correlations for the \texttt{full} variants with 2/3-degeneracy.
For the \texttt{noTK} configuration there is barely any correlation with the 2/3-degeneracy.
It thus seems that the ILP solver cannot exploit the 2/3-degeneracy---at least with the given ILP formulations.

As discussed in \cref{sec:exp}, the \texttt{default} and \texttt{hint} variants are considerably faster than the \texttt{full} variants due to having access to some (perfect) lower bound.
As we use the black box of an ILP-solver we do not have theoretical running time bounds covering the effect of this lower bound. 
Nevertheless, the correlation coefficients support some speculations:
The correlations in the middle table of \cref{tab:correlations} suggest that this running time improvement is not (so much) correlated to the 2/3-degeneracy but to another parameter.
For finding clubs the gap-parameter is a good explanation: \texttt{2club\_hint} and \texttt{3club\_hint} have high correlations with the gap parameter. 
Thus, with a better lower bound computation (i.\,e., some actual heuristic) we suspect the correlation of the \texttt{default} variant with the degeneracy to decrease and increase with the gap parameter.

For plexes this argumentation does not hold.
There seems rarely any difference in the correlation coefficients of the \texttt{default} and \texttt{hint} variant with the 2/3-degeneracy and the gap-parameter.
The reason is simple: while the gap is considerably smaller than the 2/3-degeneracy for 2/3-\textsc{Club}, this is not the case for 2/3-\textsc{Plex}, see \cref{fig:degeneracy-vs-gap}.
\begin{figure}[t]
	\centering
	\def\maxValueRun{5000}
	\def\minValueRun{0.2}
	\ifexternalize\tikzsetnextfilename{degeneracy-vs-gap}\fi
	\begin{tikzpicture}
		\begin{groupplot}[
				group style={
					group name=my plots, 
					group size=2 by 2, 
					xlabels at=edge bottom, 
					ylabels at=edge left, 
					xticklabels at=edge bottom, 
					yticklabels at=edge left, 
					vertical sep=4pt,
					horizontal sep=4pt
				}, 
				ymode=log,
				grid,
				xmode=log,
				xmin=\minValueRun,
				xmax=\maxValueRun,
				ymin=\minValueRun,
				ymax=\maxValueRun,
				ytick={0.01,0.1,1,10,100,1000},
				xtick={0.01,0.1,1,10,100,1000},
				width=0.5\hsize, 
				height=0.4\hsize, 
				cycle multiindex* list = {only marks \nextlist mark=x}, 
			]  
			\nextgroupplot[ylabel={\texttt{gap} (clubs)}]
				\addplot[only marks, mark=x] table[col sep=comma,x={gdegen}, y expr=(\thisrow{gdegen} - \thisrow{solution} +0.5)] {data/2club-comparison.csv};
				\compareMarkers{}

			\nextgroupplot[]
				\addplot[only marks, mark=x] table[col sep=comma,x={gdegen}, y expr=(\thisrow{gdegen} - \thisrow{solution} +0.5)] {data/3club-comparison.csv};
				\compareMarkers{}
				
			\nextgroupplot[ylabel={\texttt{gap} (plexes)},xlabel={\texttt{2-degeneracy}}]
				\addplot[only marks, mark=x] table[col sep=comma,x={gdegen}, y expr=(\thisrow{gdegen} - \thisrow{solution} +0.5)] {data/2plex-comparison.csv};
				\compareMarkers{}

			\nextgroupplot[xlabel={\texttt{3-degeneracy}}]
				\addplot[only marks, mark=x] table[col sep=comma,x={gdegen}, y expr=(\thisrow{gdegen} - \thisrow{solution} +0.5)] {data/3plex-comparison.csv};
				\compareMarkers{}
		\end{groupplot}
	\end{tikzpicture} 
	\caption{
		Relation between 2-degeneracy and gap.
	}
	\label{fig:degeneracy-vs-gap}
\end{figure}
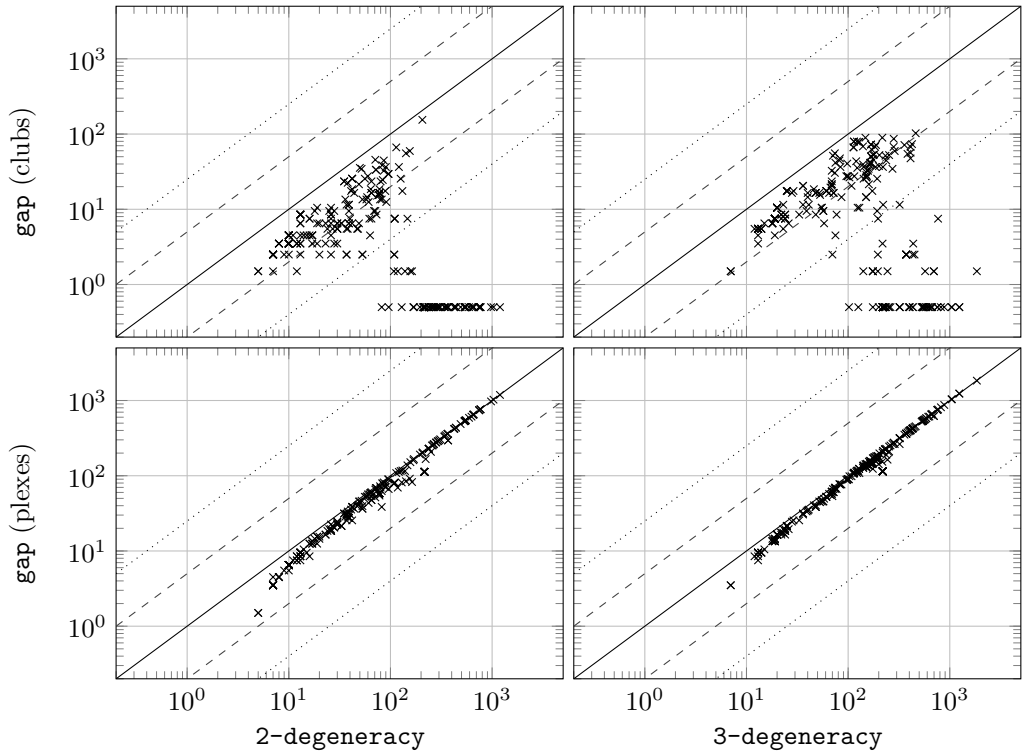
Thus, for 2/3-Plex the correlations differ only marginally between the \texttt{hint} and \texttt{default} variants.
Moreover, this explains very well why despite \texttt{2club\_noTK} being quite slow compared to \texttt{2plex\_noTK} the variant \texttt{2club\_hint} is much faster than \texttt{2plex\_hint}: The average gap for our 2-club instances is~$7.9$, hence the exponential running time dependency on the gap is manageable.
For 3-club instances, the average gap is 18.6 which, apparently, is one of the reasons why the \texttt{3club} variants are much slower than the \texttt{2club} variants.

\section{Conclusion}\label{sec:conclusion}

We provided theoretical bounds for algorithms solving $s$-\textsc{Club} and $s$-\textsc{Plex} and experimentally tested the employed Turing kernelization for~$s \in \{2,3\}$.
More importantly, we discussed the correlation between the observed running times and the theoretical bounds.
Yet, there is still a large gap between theory and practice: for example, the bases of the exponential function obtained by regression are all below 1.1---much smaller than current theoretical results suggest.
We are confident that the use of correlation coefficients as demonstrated in our work can help to close this gap. 
They are easy to employ and quite flexible.
We see the following directions for future work:
\begin{itemize}
	\item We found that the Turing kernel approach improves the runtime significantly more for clubs than plexes. We believe that this is due to the fact that the $x$-degeneracy is defined based on distance. Is there an analogous notion more suitable for finding plexes?
	\item Checking whether the running times correlates with multiple parameters is an easy extension.
	The whole process should allow for relatively easy automation. 
	An automated tool could generate a list of likely correlations from experimental results.
	These can then be analyzed theoretically with the parameterized complexity framework.
	This way, practice could give more impulses for theory.
	\item
	In this work, we only investigated the correlation coefficients between the runtime and graph parameters.
	Although this approach has clear advantages in simplicity, there are some drawbacks; for instance, the correlation between parameters are completely overlooked.
	One could perhaps sharpen the analysis using a more sophisticated statistical method.
	\item The approach is not limited to analyzing running times. 
	Other objectives could be the size of preprocessed instances (using the kernelization framework from parameterized algorithmics) or approximation factors of heuristics or approximation algorithms.
	\item While we use worst-case analysis, average case analysis or smoothed analysis are also suitable for the approach.
\end{itemize}
Of course there are downsides to the approach.
For example, how to incorporate timeouts? 
Or are different correlation coefficients better suited?
Addressing these issues is another task for future work.

\printbibliography[heading=bibintoc,title={References}]

\section*{Acknowledgement}
Tomohiro Koana is supported by the Deutsche Forschungsgemeinschaft (DFG) project DiPa (NI 369/21).

\end{document}
